\newtheorem{theorem}{Theorem}
\begin{document}
\title{Memcomputing Numerical Inversion with Self-Organizing Logic Gates}

\author{Haik Manukian, Fabio L. Traversa, Massimiliano Di Ventra\thanks{The authors are with the Department of Physics, University of California-San Diego, 9500  Gilman Drive, La Jolla, California 92093-0319, USA, e-mail: hmanukia@ucsd.edu, fabio.traversa@polito.it, diventra@physics.ucsd.edu}}

\markboth{}%
{Shell \MakeLowercase{\textit{et al.}}: Bare Demo of IEEEtran.cls for Journals}
\maketitle

\begin{abstract}
We propose to use Digital Memcomputing Machines (DMMs), implemented with self-organizing logic gates (SOLGs), to solve the problem of numerical inversion. Starting from fixed-point scalar inversion 
 we describe the generalization to solving linear systems and matrix inversion. This method, when realized in hardware, will output the result in only one computational step. As an example, we perform simulations of the scalar case using a 5-bit logic circuit made of SOLGs, and show that the circuit successfully performs the inversion. Our method can be extended efficiently to any level of precision, since we prove that producing $n$-bit precision in the output requires extending the circuit by at most $n$ bits. This type of numerical inversion can be implemented by DMM units in hardware, it is scalable, and thus of great benefit to any real-time computing application. 
\end{abstract}

\begin{IEEEkeywords}
Numerical Linear Algebra, Memcomputing, Self-organizing systems, Emerging technologies
\end{IEEEkeywords}

%
\IEEEpeerreviewmaketitle
\section{Introduction}
\IEEEPARstart{I}{n} recent decades, a growing interest into novel approaches to computing has been brewing, leading to several suggestions such as quantum computing, liquid-state machines, neuromorphic computing, etc.\cite{rese} \cite{neuromorphic} \cite{nielsen2010quantum}. Along these lines, a new computational paradigm has been recently introduced by two of us (FLT and MD), based on the 
mathematical concept of Universal Memcomputing Machines (UMMs) \cite{umm}. This novel approach utilizes memory to both store and process information (hence the name ``memcomputing'' \cite{thepa}), and in doing so it has been shown to solve complex problems efficiently \cite{memnp}. The fundamental difference with Turing-like machines as implemented with current architectures, i.e., von Neumann, is that the latter do not allow for an instruction fetch and an operation on data to occur at the same time, since these operations are performed at distinct physical 
locations that share the same bus. Memcomputing machines, on the other hand, can circumvent this problem by incorporating the program instructions not in some physically separate memory, but encoded within the physical {\it topology} of the machine. 

UMMs can be defined and built as fully analog machines \cite{memnp}. But these are plagued by requiring increasing precision depending on the problem size for measuring input/output values. Therefore, they suffer from noise issues and hence have limited scalability. Alternatively, a subset of UMMs can be defined as {\it digital} machines~\cite{dmm}. Digital Memcomputing Machines (DMMs) map integers into integers, and therefore, like our present 
digital machines are robust against noise and easily scalable. A practical 
realization of such machines has been suggested in Ref.~\cite{dmm}, where a new set of logic gates, named self-organizing logic gates (SOLGs), has been introduced. Such logic gates, being non-local in time, can {\it adapt} to signals incoming 
from {\it any} terminal. Unlike the standard logic gates, the proposed SOLGs are non-sequential, namely they can satisfy their logic proposition irrespective of 
the directionality in which the signal originates from: input-output literals or output-input literals. When these gates are then assembled in a circuit with other SOLGs (and possibly other 
standard circuit elements) that represents -- via its topology -- a particular Boolean problem, they will {\it collectively} attempt to satisfy all logic gates at once (intrinsic parallelism). After a transient that rids of the ``logical defects'' in the circuit via an instantonic phase~\cite{topo}, and which scales at most polynomially with the input size, the system then converges to the solution of the 
represented Boolean problem~\cite{dmm}.

In this paper, we suggest to apply these digital machines within their SOLG realization to numerical scalar inversion. We will also briefly mention 
how to generalize this approach to linear systems and matrix inversion. These problems scale polynomially with input size, so one would expect  
our present work to be of limited benefit. On the contrary, numerical scalar and matrix inversion, and solving linear systems are serious bottlenecks in a wide variety of science and engineering applications 
such as real-time computing, optimization, computer vision, communication, deep learning, and many others \cite{compvis} \cite{numalg} \cite{nnets} \cite{extremelearn} \cite{wirelessinv}. An efficient hardware implementation of such numerical operations would then be of great value in the solution of these tasks. 

The rules for basic numerical manipulations for engineering and scientific applications are given by the IEEE 754 floating point standard~\cite{ieeefloat}. In modern day processors, alongside the arithmetic logic units which operate on integers, there exist floating-point units (FPUs) that operate on floating point numbers with some given precision~\cite{intelArch}. Whilst addition and multiplication operations are quite optimized in hardware, floating-point division, which typically employs a Newton-Raphson iteration\cite{hardware_div}, performs three to four times slower in latency and throughput.\footnote{Latency - Number of clock cycles it takes to complete the instruction. Throughput - Number of clock cycles required for issue ports to be ready to receive the same instruction again.}  In the case of matrices, inversion is typically performed in software (LAPACK, BLAS). Most algorithms must continuously communicate between the CPU and memory, eventually running into the von Neumann bottleneck. This is a prime concern for software matrix inversion. In fact, methods to minimize communication with memory to solve linear systems is an active area of research.~\cite{matrixinv} In recent years, real-time matrix inversion has been implemented in hardware by FPGAs for wireless coding. \cite{mimoinv} Even a quantum algorithm to speed up the solution of linear systems has been proposed \cite{quantumlin}. However, these hardware solutions are typically constrained to very small systems or cryogenic temperatures, respectively. 

In this paper, we avoid many of these issues by introducing a fundamentally different approach to scalar (and matrix) inversion. We propose solving the inversions in hardware by employing DMMs 
implemented with self-organizable logic circuits (SOLCs), namely circuits made of a collection of SOLGs. In this way, the need for a complicated approach to inversion is greatly simplified since the `algorithm' implemented is essentially a Boolean problem, and the computational time is effectively reduced to real time when done in hardware. Our novel, and non-trivial contribution here is extending the factorization solution in \cite{dmm} to a circuit which inverts fixed point scalars. We demonstrate that this can be done efficiently at any desired precision by proving that requiring $n$ bits of precision in the result demands extending the factorization circuit by at most $n$ more bits. To the authors' knowledge, this is the first explicit construction of a fixed point inversion, with an arbitrary specified precision, into an exact factorization between integers. We also provide a roadmap from our scalar inversion method toward full matrix inversion and the solution of linear systems, the full implementation of which is left for future work.

This paper is organized as follows: In Sec. 2 we briefly outline the concept of DMMs and SOLCs. In Sec. 3 we describe in detail how the scalar inversion problem can be solved so that the reader can follow all its conceptual and practical steps without being bogged down by details. In this section we also simulate the resulting circuit for several cases, and discuss the scalability of our approach. In Sec. 4 we describe how matrix inversion can be done by repeated application of our solution for scalars. Finally, in Sec. 5 we report our conclusions. 

\section{Digital Memcomputing Machines and Self-Organizing Logic Circuits}
The method we employ for numerical inversion is an extension of the factorization solution done by Traversa and Di Ventra in Ref.~\cite{dmm}. We build on their method to perform scalar inversion, and by generalization, find the inverse of a matrix. 

The SOLCs we utilize here are a practical realization of DMMs, which themselves are a subclass of UMMs, and thus take advantage of \emph{information overhead}, namely the information embedded in the topology of the network rather than in any memory unit, and the \emph{intrinsic parallelism} of these machines, which refers to the fact that the transition function of the machine acts simultaneously on the collective state of the system~\cite{umm}. Digital memcomputing machines can be formally defined in much the same way as Turing machines, as the following eight-tuple:~\cite{dmm}
\begin{equation}
\text{DMM}= (\mathbb{Z}_2, \Delta, \mathcal{P}, S, \Sigma, p_0, s_0, F).
\end{equation}
\noindent
Here $\Delta$ is a set of transition functions,
\begin{equation}
\delta_{\alpha}: \left(\mathbb{Z}_2^{m_\alpha} \setminus F\right) \times \mathcal{P} \to \left(\mathbb{Z}_2^{m_\alpha'} \setminus \mathcal{P}^2\right) \times S,
\end{equation}
where $S$ is a set of indices $\alpha$, $\mathcal{P}$ is a set of arrays of pointers $p_\alpha$ that select memprocessors called by $\delta_\alpha$, $\Sigma$ is a set of initial states, $p_0$ an initial array of pointers, $s_0$ the initial index, and $F$, a set of final states.

From the above abstract definition we take two concrete points relevant to our discussion here. The first point is the fact that $\delta_\alpha$, the transition functions, act on the {\it collective states} of the memprocessors, which gives rise to the intrinsic parallelism of the DMM. They also map finite sets of states to finite sets and hence describe a digital system. Secondly, we can explicitly see the novelty of DMMs, and how they differ from Turing machines. The way the DMM works is by first encoding a given problem into the topology of the network of memprocessors, which specifies the structure of the set $\mathcal{P}$, which in turn facilitates a non-trivial communication between states by the transition functions. Thus, the DMM works by leveraging the structure of a given problem as reflected in the topology of the machine. This gives rise to the information overhead, and not in any additional software or instructions. This makes the DMM a special purpose machine designed to solve a specific problem in an efficient way. Compare this to a Turing machine, which is a more general computational paradigm but lacks the ability to specialize its processing to a given task, without some external instruction set.

All this is realized in practice by SOLCs~\cite{dmm}. In these circuits, the computation is performed by first constructing the ``forward problem'' with standard logic gates, namely the Boolean circuit that would solve such a problem, if the result were known. This specifies the topology of the DMM. To be more precise, if we let $f: \mathbb{Z}_{2}^n \to \mathbb{Z}_2^m$ be a system of Boolean functions where $\mathbb{Z}_2 = \left\{0,1\right\}$, then we ask to find a solution $ {\bf y} \in \mathbb{Z}_2^n$ of $f({\bf y}) = {\bf b}$. If numerical inversion is such problem and we replace the gates of the Boolean functions with those that self-organize, the SOLC so constructed would find $f^{-1}({\bf b})$, its solution.
 
There is a fundamental difference between standard networks of logic gates and ones that self-organize, in that the latter ones allow for any combination of input/output satisfiable in the Boolean sense, i.e., {\it non contradictory}. One can, in principle, provide these logic circuits with combinations that are not satisfiable. In this case, the system will not settle to any fixed (equilibrium) point. For the formal analysis of these dynamical systems and convergence properties, we refer the reader to the details in Ref.~\cite{dmm}, and Ref.~\cite{no-chaos} where it was demonstrated that 
chaotic behavior cannot emerge in DMMs with solutions. We stress that this is not the same as a ``reversible logic gate'' that requires the same number of input and output literals and an invertible transition function~\cite{shende2003synthesis}. 
 
From a physical point of view, the entire network of gates acts as a continuous dynamical system and self-organizes into a global attractor, the existence of which was demonstrated in \cite{dmm}. This network of logic gates exploits the spatial non-locality of Kirchhoff's laws and the adaptability afforded by the time non-locality to find the solution of a given problem in one computational step if implemented 
in hardware. The fundamental point is that although the computation occurs in an analog (continuous) sense, SOLCs represent a scalable {\it digital} system since both the input and output are written and read digitally, namely require a {\it finite precision}.

\section{Detailed Analysis of Scalar Inversion}

In this section, we outline a detailed construction of a system of self-organizing logic gates which solves a scalar inversion problem consisting of fixed precision binary numbers. In terms of circuit topology, scalar inversion is almost bitwise the same as scalar multiplication, and thus our inversion circuit looks similar to the factorization circuit it is based on. However, in the inversion case a few more complications emerge. Namely, how does one map a general scalar inversion problem onto a logical circuit, and hence an exactly satisfiable problem? 

This problem can be solved by constructing an ``embedding'' into a higher dimensional space where the arithmetic is between natural numbers, and hence exact. We do this by  padding the problem with an extra register of \emph{satisfiability} bits. Also, one must be especially careful with the interpretation of the input and output since the bitwise input into the circuit does not map transparently into the numerical values of input and output. 

To clarify all this, let us consider the problem of inverting a scalar, say $a\times b = c$. We are given $a$ and $c$ with fixed-point exponents and $n$-bit mantissas where we normalize the leading bits of $a$ and $c$ to be 1, so we are guaranteed that there is always a solution.  We forgo the discussion of the sign, as the sign bit of the product is simply an XOR between the sign bits of the constituents. The task before us is to perform an analysis of the forward problem, making sure that we translate the problem into a satisfiability (SAT) problem, since this is essentially what the SOLC solves. 

In its most general form the problem is:
\begin{multline}
2^{m_a}(0.1 a_{n-2}\cdots a_{0})\times 2^{m_b} (0.b_{n-1}\cdots b_{0})  =  \\
=2^{m_c} (0.1 c_{n-2} \cdots c_{0}), 
\end{multline}
where $a_i$, $b_i$ and $c_i$ in the above equation are either 0 or 1. 

We immediately see that the following relationship between the exponents holds: $m_a + m_b = m_c$. By setting the unknown exponent value to be $m_b = m_c - m_a$, we are left with only the relationship with the mantissas. 

What remains is a detailed analysis of the scalar inversion of the binary mantissas which we now consider independently of the exponent, essentially treating the arithmetic between natural numbers -- where each mantissa can be reinterpreted explicitly as an integer. For example, the $n$ bits of $a$ would be reinterpreted now as, 
\begin{equation}
a = a_{n-1}2^{n-1} + a_{n-2}2^{n-2}+\cdots+ a_02^0
\end{equation}
The same procedure would be performed with $b$ and $c$. 

It is obvious that to satisfy the consistency of the arithmetic, the size of the mantissa of $c$ needs to be equal to the sum of the number of bits of $a$ and $b$. To that effect, we must add $n$ bits -- which we refer to as \emph{consistency} bits -- to $c$ which we set to zero. 

There now remain two issues. Since this is arithmetic between natural numbers, it must be exact, and under the current constraints it is not always the case (take $a=3$ and $c=1$, in digital representation, for example). To address this issue, we pad both $b$ and $c$ with what we call \emph{satisfiability} or SAT bits, labeled $b_f$ and $c_f$ respectively, which are not specified and are allowed to float. This is to give the problem enough freedom to be exactly satisfiable in the Boolean sense (and hence solvable by a SOLC). Below we address the issue of how many such bits one needs to ensure exact satisfiability. 

 \begin{figure}[!t]
  \centering
  \includegraphics[width=0.4\textwidth]{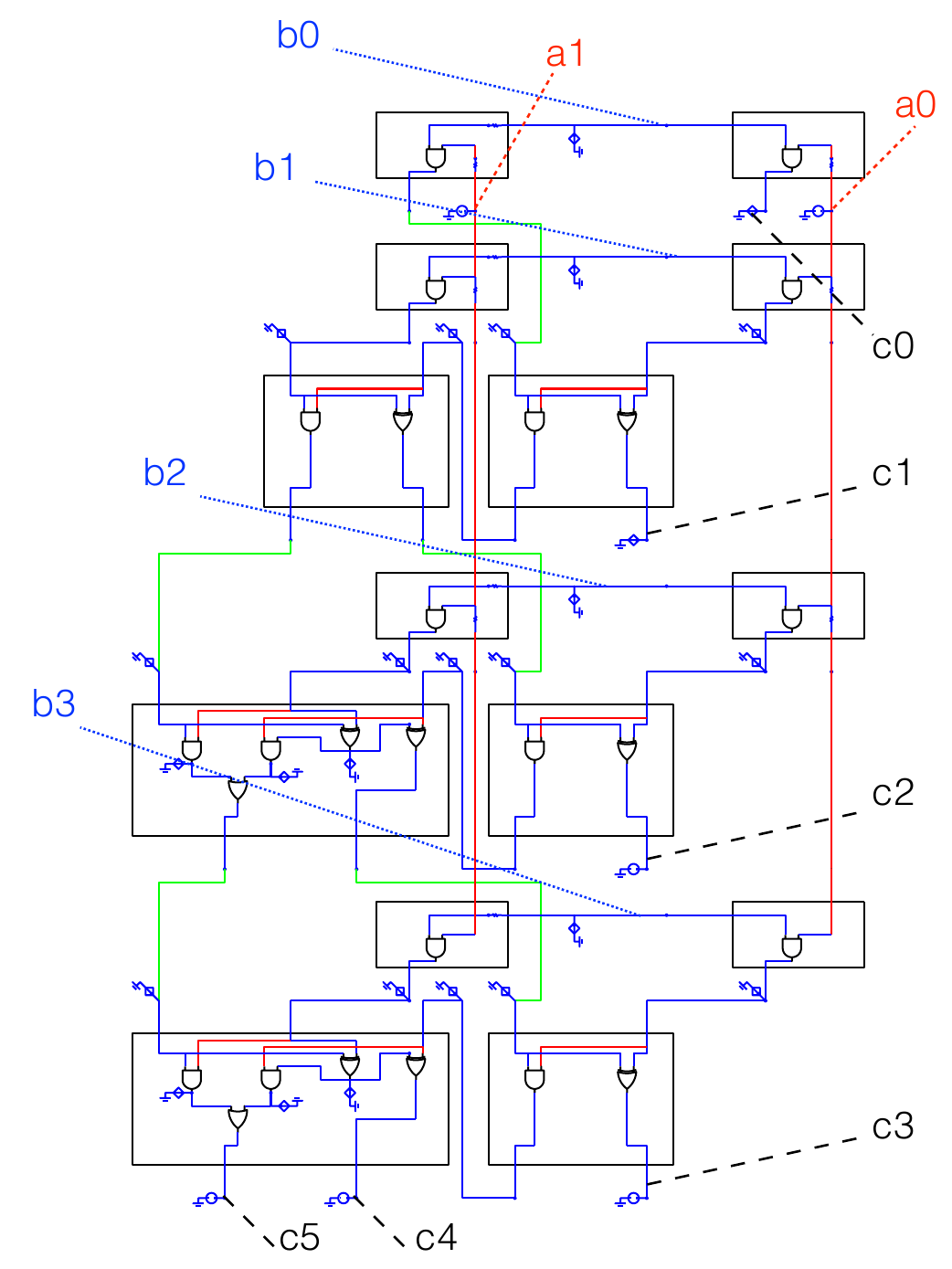}
  \caption{A schematic of a 2-bit inversion circuit. The nodes represent voltages interpreted as the bits of the input/output. The nodes are connected to a series of 2- and 3-bit adders which form the product $a\times b$ connected to the resulting bits of $c$.}

    \label{figurex}
\end{figure}

There is finally the issue of the accuracy of our answer $b$ in bits. In order to control precision, we pad $a$ with an $n_a$ number of zeros or \emph{enhanced precision} bits, which we actually show do not change the solution. 

We now show all these steps explicitly. In a more compact representation, and keeping in mind the above definitions, we write our original problem as
\begin{equation}
\underset{n}{[{\bf a}]}\underset{n_a}{[{\bf 0}]} \times \underset{n}{[{\bf b}]}\underset{n_b}{[{\bf b_f}]} = \underset{n}{[{\bf c}]}\underset{n}{[{\bf 0}]}\underset{n_a + n_b}{[ {\bf c_f}]}
\end{equation}
Here $a$, $b$ and $c$ represent the bits of the mantissas, $b_f$ and $c_f$ represent the floating bits, and $n_a$ and $n_b$ represent the size of the accuracy register and floating bit register, respectively. 
We are essentially constructing an ``embedding'' of the problem to one in a higher dimensional space where the arithmetic can indeed be satisfied exactly. The goal is to then `project' back, or truncate, onto the original space to obtain $b$ with the desired accuracy. 

The embedding is given by the following injective map where $\tilde{a}$ and $\tilde{c}$ represent the precision and consistency bits of zeroes,
\begin{align}
&\hat{a} = a2^{n_a} + \tilde{a}\nonumber \\
&\hat{b} = b2^{n_b} + b_f\\
&\hat{c} = c 2^{n_a + n_b + n} + \tilde{c}2^{n_a + n_b } + c_f\nonumber
\end{align}
Since $\tilde{a}$ and $\tilde{c}$ are identically zero, the inversion we solve, $\hat{a} \times \hat{b} = \hat{c}$ is written,
\begin{align}
a2^{n_a}(b2^{n_b} + b_f) &= c2^{n_a + n_b + n} + c_f\nonumber\\
ab2^{n_a + n_b} + ab_f2^{n_a} &= c2^{n + n_a + n_b} + c_f \label{eq5}
\end{align}
We know that the number of bits of the left in Eq.~(\ref{eq5}) must be equal to the number of bits on the right. The problem then becomes:
\begin{equation}
\overbrace{\underbracket{ \left[\blacksquare \blacksquare \right] }_{n} \underbracket{\left[\blacksquare \blacksquare \right]  }_{n} \underbracket{\left[{\bf 0}\right]  }_{n_b} \underbracket{\left[{\bf 0} \right] }_{n_a}}^{ab2^{n_a + n_b}} +
\overbrace{\underbracket{ \left[\blacksquare \blacksquare \right] }_{n} \underbracket{\left[\blacksquare \blacksquare \right]  }_{n_b} \underbracket{\left[{\bf 0}\right]  }_{n_a} }^{ab_f2^{n_a}} = 
\overbrace{\underbracket{ \left[{\bf c } \right] }_{n} \underbracket{\left[{\bf 0 } \right]  }_{n} \underbracket{\left[{\bf 0}\right]  }_{n_b} \underbracket{\left[{\bf 0} \right] }_{n_a}}^{c2^{n + n_a + n_b}}+
\overbrace{\underbracket{ \left[\blacksquare \blacksquare \right] }_{n_b} \underbracket{\left[\blacksquare \blacksquare \right]  }_{n_a} }^{c_f}\label{eq6}
\end{equation}
The black boxes represent generic non-zero bits. Here, we can see that the last $n_a$ bits of the 2nd term on the rhs of  Eq.~(\ref{eq6}) are required to be zero. Therefore, additional accuracy padding on $a$ gives us no more significant digits in our solution of $b$ and the problem reduces to:
\begin{equation}
\label{eq7}
\overbrace{\underbracket{ \left[\blacksquare \blacksquare \right] }_{n} \underbracket{\left[\blacksquare \blacksquare \right]  }_{n} \underbracket{\left[{\bf 0}\right]  }_{n_b}}^{ab2^{n_b}} +
\overbrace{\underbracket{ \left[\blacksquare \blacksquare \right] }_{n} \underbracket{\left[\blacksquare \blacksquare \right]  }_{n_b} }^{ab_f} = 
\overbrace{\underbracket{ \left[{\bf c } \right] }_{n} \underbracket{\left[{\bf 0 } \right]  }_{n} \underbracket{\left[\blacksquare \blacksquare \right]  }_{n_b} }^{c2^{n + n_b} + c_f}
\end{equation}
We are now ready to organize the main result of this section in the following theorem.
\begin{theorem} Given a scalar inversion problem with a mantissa of size $n$, the number of floating bits, $n_b$, necessary to invert the input is at most $n$.\end{theorem}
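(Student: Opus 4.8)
The plan is to read off from Eq.~(\ref{eq7}) the exact integer condition that the floating bits must satisfy and then reduce the entire question to a one-line statement about multiples of $a$. Collecting the two terms on the left of Eq.~(\ref{eq7}) into a single product and writing $\hat{b}=b\,2^{n_b}+b_f$ for the full unknown, the equation reads $a\hat{b}=c\,2^{n+n_b}+c_f$, where the only remaining freedom is the nonnegative integer $c_f$ constrained to its register $0\le c_f<2^{n_b}$ (with $b$ and $b_f$ then recovered as the high and low words of $\hat{b}$). Exact satisfiability is therefore equivalent to the existence of an integer $\hat{b}$ with
\begin{equation}
c\,2^{n+n_b}\ \le\ a\hat{b}\ <\ c\,2^{n+n_b}+2^{n_b};
\end{equation}
in words, the half-open interval of length $2^{n_b}$ sitting just above $c\,2^{n+n_b}$ must contain a multiple of $a$.

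First I would make the normalization explicit: since the leading mantissa bit of $a$ is fixed to $1$, the reinterpreted integer obeys $2^{n-1}\le a\le 2^n-1$, so in particular $a<2^n$. Next I would exhibit a witness directly rather than argue abstractly, by taking $\hat{b}=\lceil c\,2^{n+n_b}/a\rceil$, the smallest integer with $a\hat{b}\ge c\,2^{n+n_b}$. The elementary overshoot bound for the ceiling gives $0\le a\hat{b}-c\,2^{n+n_b}<a$, so setting $c_f:=a\hat{b}-c\,2^{n+n_b}$ produces a candidate floating value with $0\le c_f<a$.

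The crux is then a pure sizing check: this $c_f$ is admissible precisely when it fits its register, i.e. $c_f<2^{n_b}$, and since $c_f<a\le 2^n-1$ this holds as soon as $2^{n_b}\ge 2^n$, that is $n_b=n$. Hence $n$ floating bits always suffice, which is the claimed bound. I would then close by checking consistency of the recovered data: $b_f=\hat{b}\bmod 2^{n_b}$ automatically fits its $n_b$-bit register, and $c_f<2^{n_b}$ forces the middle $n$ bits on the right of Eq.~(\ref{eq7}) to vanish while pinning its top $n$ bits to $c$ exactly, so the SAT instance encoded by the circuit is genuinely solved. A brief remark that worst-case inputs (e.g.\ odd $a=2^n-1$ with a suitable $c$, for which the residues $-c\,2^{n+n_b}\bmod a$ sweep across $[0,a)$) saturate the estimate would show the bound cannot be lowered in general.

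I expect the only real obstacle to be bookkeeping rather than mathematics: one must be careful that the single slack register $c_f$ is the correct place to absorb the rounding, that the overshoot is genuinely bounded by $a$ and not merely by $2^n$, and that imposing $c_f<2^{n_b}$ does propagate to zero out the consistency bits and fix the high word to $c$. Once the interval reformulation is in place, the normalization inequality $a<2^n$ does all the remaining work.
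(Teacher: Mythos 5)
Your proof is correct and takes essentially the same route as the paper: reduce Eq.~(\ref{eq7}) to exact integer arithmetic, divide $c\,2^{n+n_b}$ by $a$, and note that the residual absorbed by the floating bits is bounded by $a\le 2^n-1$, so $n_b=n$ suffices. The only (welcome) difference is that your ceiling witness $\hat{b}=\lceil c\,2^{n+n_b}/a\rceil$ produces a nonnegative $c_f\in[0,a)$ directly, whereas the paper's floor-type Euclidean division $c\,2^{n_b+n}=aq+r$ with $0\le r<|a|$ ends up identifying $c_f$ with $-r$, a sign slip that your rounding-up convention cleanly avoids.
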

\begin{proof}
Our original problem has now been cast as one between integers $a\times \hat{b} = \hat{c}$, seen in Eq. \ref{eq7}. Assuming $a \neq 0$ (which, in our method, it is not by construction) by Euclidean division we are guaranteed that $c2^{n_b + n}= aq + r$ where $q,r \in \mathbb{Z}$ and $0 \leq r < |a|$. 
\begin{align}
a\hat{b} &= \hat{c} \nonumber \\
a(b2^{n_b} + b_f) &= c2^{n_b +n} + c_f\\
a (\underbrace{b2^{n_b} + b_f}_{q}) \underbrace{- c_f}_{r}&= aq + r\nonumber
\end{align}
Since $r < |a|$, and $a$ has $n$ bit length, one would need \emph{at most} $n_b = n$ bits to represent the floating bits, and our final output is provided by the first $n$ bits of $q$. 
\end{proof}
We conclude that to recover $n$ bits of precision in the numerical inverse, one would need to extend the register of $b$ (and for consistency, also $c$) by $n$ bits. By doing so, the scalar inversion problem becomes a problem in exact bitwise arithmetic, and thus a circuit which can be exactly satisfied and solved by a SOLC.

 \begin{figure}[!t]
  \centering
    \includegraphics[width=0.45\textwidth]{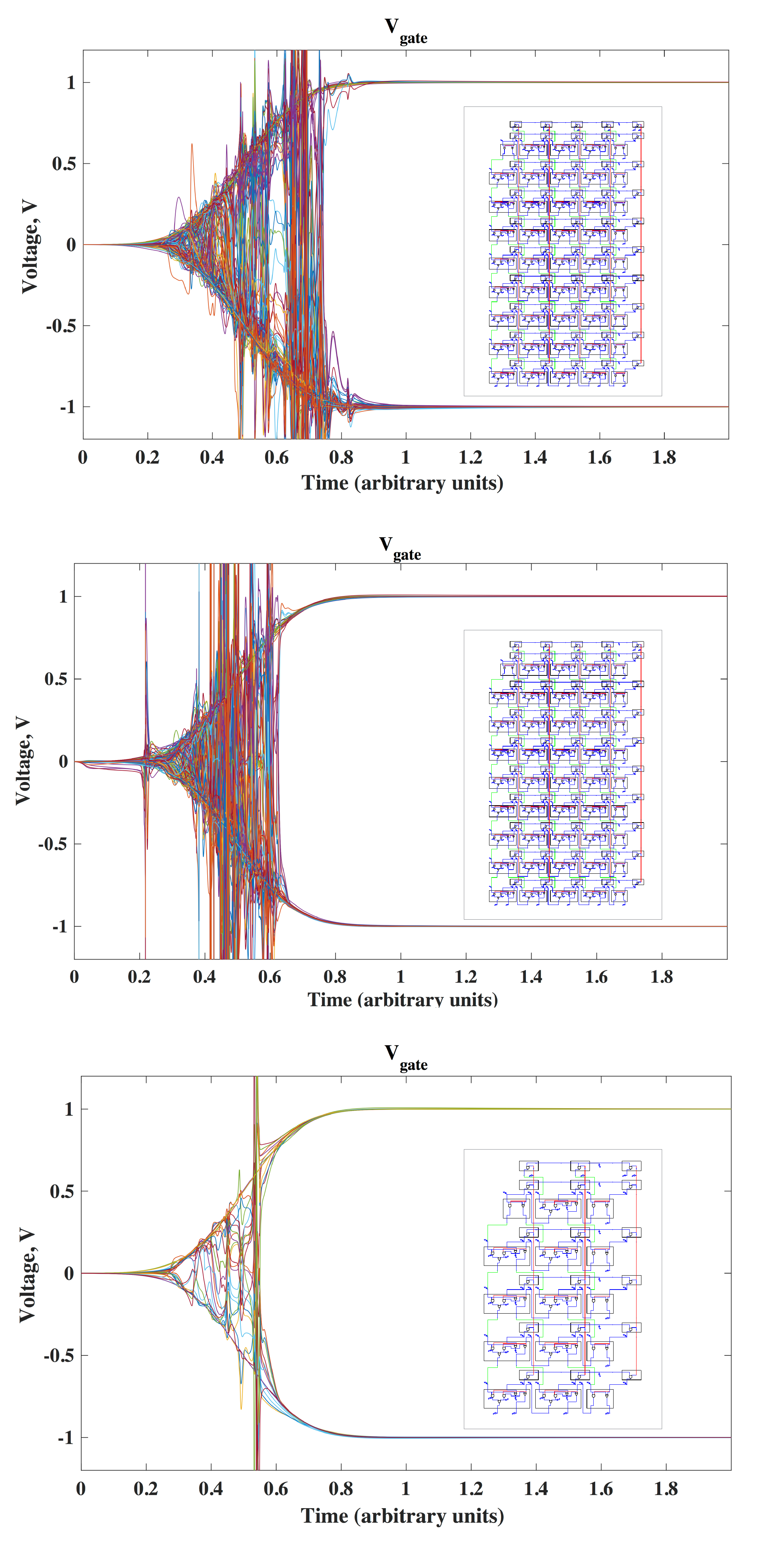}
      \caption{A plot of the voltage of all logic gate literals in several simulated systems with the corresponding circuit displayed as inset. The vertical axis represents voltages at the nodes with 1V representing the Boolean 1, and -1V the Boolean 0. The systems are found to be converged after some simulation time (arbitrary units). Plotted configurations are, from the top, inverting $a=5$ in a 5-bit circuit, inverting $a=3$ in a 5-bit circuit, and inverting $a=3$ in a 3-bit circuit.}
    \label{figurexx}
    \end{figure}

{\it Numerical Simulations-}
We constructed the solution of numerical inversion by extending the factorization solution found in \cite{dmm}. The modified circuit contains the extended registers of satisfiability bits and freely floating nodes attached to voltage-controlled differential current generators representing them. We performed simulations on up to 5-bit examples with 
different initial conditions using the Falcon$^\copyright$ simulator\footnote{Falcon$^\copyright$ simulator is an in-house circuit simulator developed by {F. L. Traversa} optimized to simulate self-organizing circuits. Some examples of other simulations can be found, e.g., in \cite{dmm,topo,13_amoeba,Traversa_TCAD,Traversa_AEU}. In all cases, the resulting $n$ bits of the inverse always matched the corresponding $n$ bits of the exact answer.}.

A simplified circuit of a 2-bit inversion is shown in figure \ref{figurex} with the internal logic gates inside the 2-bit and 3-bit adders shown for clarity. In figure \ref{figurexx} we plot the simulated voltages across several cases of scalar inversion. The topmost simulation is of a 5-bit circuit inverting $a=10_{10} = 01010_2$ as a function of time, which converges to the logical 1 (voltage $V_{gate}=1 V$) and logical zero ($V_{gate}=-1 V$) once the inverted solution is found. The solution is $b = 0.1_{10}= 0.00011001001100..._2$. While expressible in base 10 as a finite decimal, in binary the expression does not terminate. However, our circuit finds the correct 5 truncated digits of the exact solution $b \approx 0.00011_2$ in binary representation. 

A few more examples are plotted for comparison. The simulation in the middle finds the inverse of $a=3$ in a 5-bit circuit. Finally the bottom-most simulation finds the inverse of $a=3$ in a 3 bit circuit.

{\it Scaling}- The multiplication operation on two $n$ bit numbers involves $n^2$ AND gates and $n$ additions. In our case, if we fix the number of bits of the input and increase the precision $p$ of the inverse, the number of logic gates will scale as $O\left((n+p)^2\right)$. The resulting circuit scales quadratically in the number of input bits with fixed precision, and also quadratically in the precision, while fixing the number of bits of the input. Since the inversion is essentially an extended factorization, the equilibria will be reached exponentially fast and the total energy will scale polynomially in the amount of time to reach the equilibria as discussed in \cite{dmm}. 

 \begin{figure}[!h]
  \centering
    \includegraphics[width=0.5\textwidth]{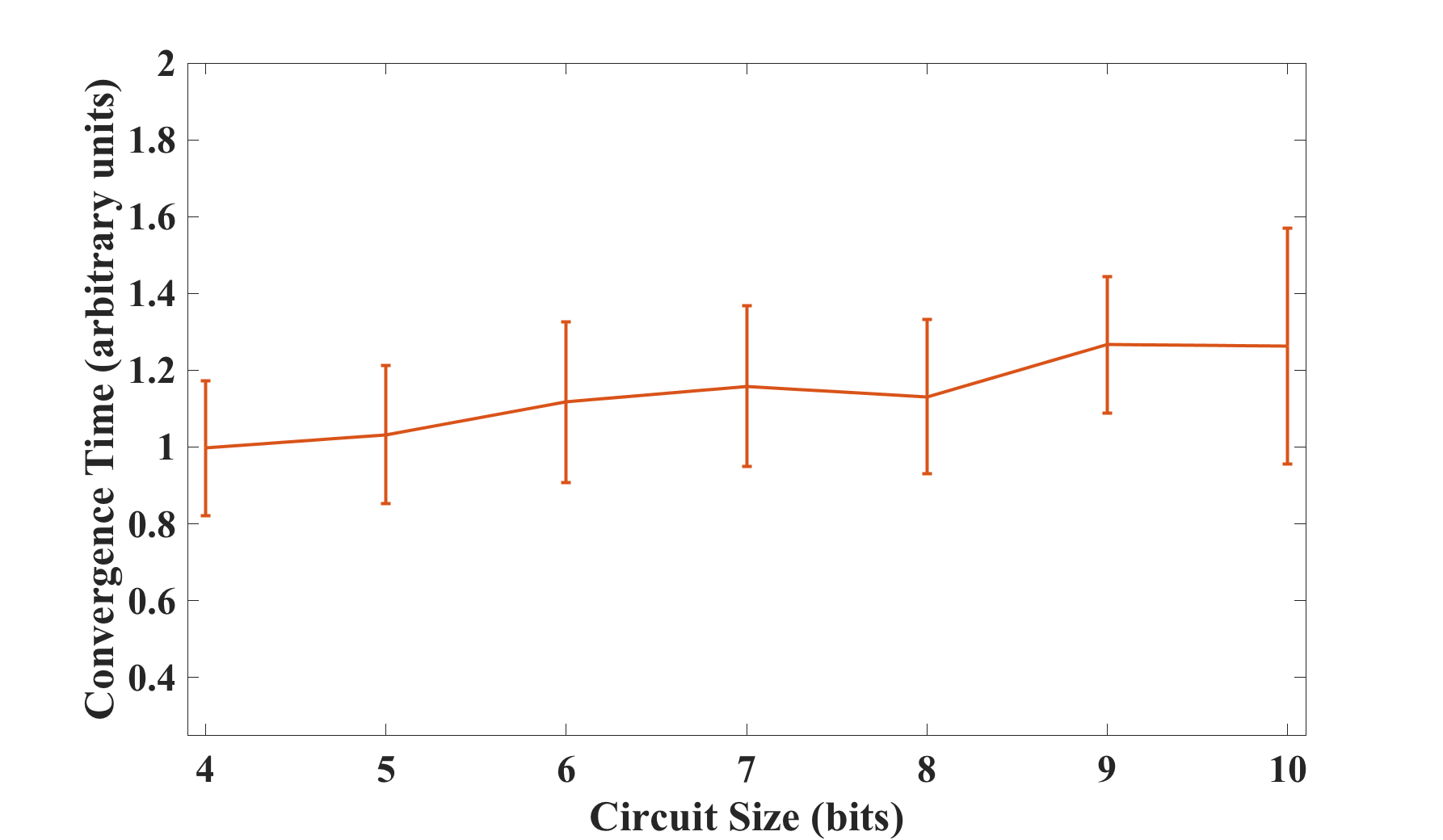}
      \caption{The average SOLC machine convergence times (proportional to the number of steps required to find the solution) across 10 simulations as a function of the size of the problem in bits. This plot shows the same scalar $(a=2)$ being inverted with a variable circuit size. Since each simulation starts with random initial conditions, error bars around the mean are shown to give a sense of the resulting dispersion.}
    \label{figure_convg}
    \end{figure}

It is also worth noting that our inversion circuit performs the search of the equilibrium point collectively, by 
employing instantons~\cite{topo}. This makes convergence times, if implemented in hardware, remain relatively independent from the system size since 
instantons are non-local objects that span the {\it entire} circuit irrespective of its size~\cite{topo}. A demonstration of this point is shown in Fig. \ref{figure_convg}, where we plot the machine time (namely the number of steps required to find the solution) versus the number of bits for a given inversion problem. We say the problem has converged when the maximum distance between all the voltages at the logic gates and true logical voltages (-1 and 1) is less than $\epsilon = 0.01$. More precisely, the simulation has reached the convergence time, $t_c$, as soon as the following inequality is satisfied:
\begin{equation}
C(t_c) = \max_i \left\{\min_{v_\text{logical} \in \{1, -1\}} |v_i(t_c) - v_{\text{logical}}| \right\} \leq \epsilon \label{Ct}
\end{equation}
Here $v_{i}(t)$ represents the voltage at the $i$-th logic gate terminal in the circuit as a function of time. Taken first is the minimum over the logical voltages 1 and -1 which represent the Boolean 1 and 0 respectively. For clarity, we plot the function $C(t)$ in Fig. \ref{figure_convg_time} together with the corresponding simulation. These numerical results give further evidence for the efficiency and scalability of the approach to numerical inversion we have developed here. 

 \begin{figure}[!h]
  \centering
    \includegraphics[width=0.5\textwidth]{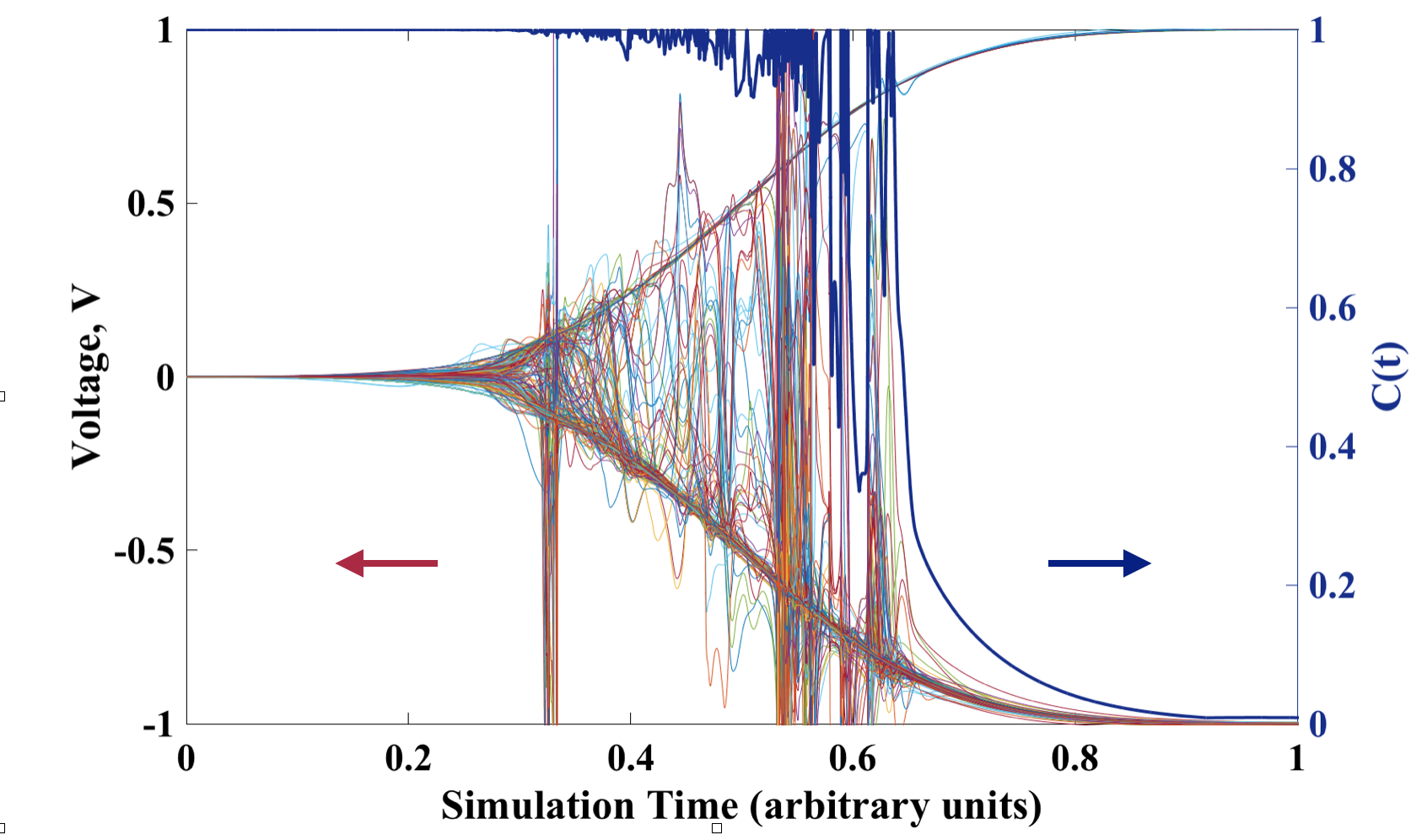}
      \caption{The function $C(t)$ [see Eq.~(\ref{Ct})] and the related SOLC simulation for the case of 4-bit inversion of $a = 2$.}
    \label{figure_convg_time}
    \end{figure}

\section{Extension to Matrix Inversion and Linear Systems}
Once we have discussed explicitly the case of scalar inversion, it is now a simple (although cumbersome) exercise to extend it to the matrix inversion case. 
However, the explicit procedure for a general matrix inversion would require too many details to fit in this paper and we will report it in a subsequent publication. We then just provide the explicit procedure given 
a $2 \times 2$ non-singular matrix $A$. 

Consider then the following matrix equation $AX = I$, where the matrix $A$ is given with $n$ bit binary entries, $X$ is the solution we seek, and $I$ is the identity matrix,
\[ \left[ \begin{array}{cc}
a_{11} & a_{12}  \\
a_{21} & a_{22}  \\
 \end{array} \right]
 \left[ \begin{array}{cc}
 x_{11} & x_{12}  \\
 x_{21} & x_{22}  \\
 \end{array} \right] =
 \left[ \begin{array}{cc}
 1 & 0  \\
 0 & 1  \\
 \end{array} \right]\,.\]

This is equivalent to solving the following two linear systems of the form $Ax = b$, one system for each column of the resulting inverse matrix.
\[ \left[ \begin{array}{cc}
a_{11} & a_{12}  \\
a_{21} & a_{22}  \\
\end{array} \right]
\left[ \begin{array}{c}
x_{11}   \\
x_{21}   \\
\end{array} \right] =
\left[ \begin{array}{c}
1   \\
0  \\
\end{array} \right]\,,\] 

\[ \left[ \begin{array}{cc}
a_{11} & a_{12}  \\
a_{21} & a_{22}  \\
\end{array} \right]
\left[ \begin{array}{c}
x_{12}   \\
x_{22}   \\
\end{array} \right] =
\left[ \begin{array}{c}
0   \\
1 \\
\end{array} \right]\,.\] 
The systems above are independent and can be solved in parallel. This gives us the following coupled equations that we must solve simultaneously with a SOLC: 

\begin{align}
a_{11}x_1 + a_{12}x_2 &= b_1\nonumber\\
a_{21}x_1 + a_{22}x_2 &= b_2\,.
\end{align}

There are 6 total arithmetical operations to be performed: 4 products and 2 sums. We approach the general inversion problem by constructing the relevant bit-wise logic circuit modules that will perform the fixed precision multiplication and addition. We perform all of the products in the manner we discussed above. The results of these four products are then summed (with signs) and set equal to $b_1$ and $b_2$ respectively. 

The signed binary addition can be performed using the 2's complement method~\cite{parhami2009computer}. First, an XOR is applied to every non-sign bit of the products ($a_{11}x_1$, $a_{12}x_2$, etc.) with the sign bit (sign($a_{11}x_1$) ..). The sign bit is then added to the result of the XOR. This makes it such that if the product was negative, it takes the 2's complement (which is flipping every non-sign bit and adding 1 to the result) or if the product is positive, then the result is not modified. These 2's complement additions are applied to the outputs of all four products which occur in our $2 \times 2$ system. After which the resulting two additions are set equal to the 2's complements of $b_1$ and $b_2$. This completes the SOLC for the linear system. The full inverse is found by applying this circuit to all columns of the given matrix.

%
%

\section{Conclusion}
We have demonstrated the power, and more importantly, the realizability, of self-organizing logic gates applied to scalar/matrix inversion and the solution of linear systems. The extensions and applications of this work are plentiful and far reaching. The method developed in the paper has direct applications and benefit to many fields (machine learning, statistics, signal processing, computer vision, engineering, optimization) that employ high-performance methods to solve linear systems. While the current work is immediately applicable to many problems, the concept can be extended to support IEEE floating point specification in the input and output for scientific computation. 

The authors envision the effectiveness of these machines to be realized in specialized hardware that is designed to interface with current computers. These DMMs built in hardware will be capable of solving efficiently many numerical problems, scalar and matrix inversion being a small subset of potential applications which include matrix decompositions, eigenvalue problems, optimization problems, training models in machine learning, etc. 

In analogy with how specialized GPUs interface with the standard CPUs of present computers to solve specific parallel problems much more efficiently than a CPU, a dedicated DMM can be constructed to work in tandem with current computers where the CPU/GPU would outsource computationally intensive problems which the DMM can solve rapidly. We thus hope our work will motivate 
experimental studies in this direction. 
\section*{Acknowledgment}
One of us (H.M.) acknowledges support from a DoD SMART scholarship. F.L.T. and M.D. acknowledge partial support from the Center for Memory and Recording Research at UCSD and LoGate Computing, Inc.

\bibliographystyle{IEEEtran}
\bibliography{IEEEabrv,IEEEexample}
%
%
%
%

\end{document}